\pgfplotsset{compat=1.18}
\newtheorem{theorem}{Theorem}
\definecolor{darkblue}{rgb}{0.0,0.0,0.5}
\definecolor{darkgreen}{RGB}{0,100,0}
\begin{document}

\title{Device Independent Quantum Secret Sharing Using\\ Multiparty Pseudo-telepathy Game}
\author{Santanu Majhi}
\email{santanum\_r@isical.ac.in}
\affiliation{ Indian Statistical Institute, Kolkata, India}

\author{Goutam Paul}
\email{goutam.paul@isical.ac.in}
\affiliation{ Indian Statistical Institute, Kolkata, India}

\begin{abstract}
Device-independent quantum secret sharing (DI-QSS) provides security against untrusted quantum devices. While device-independent quantum key distribution (DI-QKD) using Mermin-Peres magic square game [Zhen \emph{et al.}, Phys. Rev. Lett, 2023] has been proposed, we present the first DI-QSS protocol based on a pseudo-telepathy parity game without requiring dedicated rounds and specific basis configuration, unlike CHSH-based DI-QSS schemes [Zhang \emph{et al.}, Phys. Rev. A, 2024]. Our scheme simultaneously certifies device-independence and key generation using a single test, achieving optimal performance for a seven-qubit GHZ state configuration. Security against collective attacks is analyzed, and a positive key rate is obtained under white noise and photon loss. Moreover, we show a bitwise advantage over the previous protocol for producing the same raw key length.
\end{abstract}
 \maketitle      
Quantum secret sharing (QSS) involves a dealer to share a secret message $\mathfrak{m}$ among $\mathcal{P}$ participants such that the message can only be reconstructed from all, or a qualified subset of the shares. Individual shares reveal no information about $\mathfrak{m}$, thereby ensuring information theoretic security. The essential components \textit{secret sharing} and \textit{secret reconstruction} together guarantee secure distribution and recovery of the secret~\cite{PhysRevLett.83.648}. Hillery \textit{et al.}~\cite{PhysRevA.59.1829}, first proposed a quantum secret sharing scheme using Greenberger-Horne-Zeilinger (GHZ) states to distribute a secret among three parties: Alice, Bob, and Charlie. The correlation structure among three users' bases (see details in Appendix~\ref{Analysis of Hillery et al.}) plays a crucial role in advancing the next-generation quantum secret sharing (QSS) schemes.

Imperfect or untrusted quantum devices motivated device-independent quantum key distribution (DI-QKD) protocols~\cite{PhysRevLett.113.140501,acin2006bell, acin2006efficient, acin2007device, pironio2009device, masanes2011secure, masanes2014full}, that guarantee security without trusting the devices.
Extending these ideas, it becomes crucial to investigate whether QSS protocols can be made device-independent.  
The first DI-QSS protocol was proposed by Roy and Mukhopadhyay~\cite{PhysRevA.100.012319} for arbitrary even dimensions, establishing correctness and completeness with respect to measurement devices. Subsequently, Moreno \textit{et al.}~\cite{PhysRevA.101.052339} introduced a DI-QSS scheme based on stronger forms of Bell nonlocality, enhancing robustness against adversaries. More recently, Zhang \textit{et al.} presented two DI-QSS schemes: one employing noise preprocessing and postselection~\cite{PhysRevA.110.042403} for practical applications, and another introducing a refined method for advanced random key generation~\cite{PhysRevA.111.012603}.

Notably, those works~\cite{PhysRevA.111.012603, PhysRevA.110.042403} introduced a larger combination of measurement bases and fixed a set of combinations for DI verification and another set of bases for key generation, while the other combinations are discarded. This naturally raises the question of whether the DI-QSS scheme can be realized with fewer resources, without requiring dedicated rounds and specific basis combinations. In this Letter, we propose a DI-QSS scheme based on multiparty pseudo-telepathy parity game to achieve device-independence. Additionally,  the single correlation test is sufficient for key generation also. This significantly lowers round inefficiency without adding extra assumptions beyond isolated secure labs for the parties, reliable classical post-processing, unrestricted measurement options, verified public communication, and an adversary constrained only by quantum physics.
Moreover, we perform security analysis of our protocol against collective attacks and also analyze its robustness under white noise.
For both cases, we obtained positive key rates. Finally, we derive that our scheme has a bitwise advantage over the previous protocol~\cite{PhysRevA.110.042403} while yielding the same raw key length.


\emph{Multiparty Pseudo-telepathy game.$-$}
A well known game~\cite{10.1007/978-3-540-45078-8_1} introduced by Mermin, and later studied in various contexts~\cite{brassard2004minimum, brassard2004recasting, greenberger1990bell, mermin1990s,peres2002quantum, renner2004quantum, mermin1990quantum, cleve2004consequences}, is the multiparty Pseudo-telepathy game involving $n$ players. Its general formulation is as follows.

Each player $j$ receives an input, a bit string $x_j \in \{ 0,1\}^l $, which is also interpreted as an integer in binary, with the promise that $\sum_j x_j$ is divided by $2^l.$ The player must output a single bit $y_j$ and the winning condition is 
$$\sum_{j=1}^{n} y_j = \frac{1}{2^l} \hspace{1.3mm}{\sum_{j=1}^{n} x_j} \hspace{4mm} (mod \hspace{2mm}2).$$

Currently, we restrict the game to $l=1$~\cite{brassard2004minimum, brassard2004recasting} \hspace{2mm}(i.e. the input $x_j$ is also a single bit).\\
\noindent For an $n$-party Pseudo-telepathy game, the players are denoted $A_{1},A_{2},\dots,A_{n}$. The inputs $x_{j}\in\{0,1\}$ must satisfy the constraint  
$
\sum_{j=1}^{n} x_{j} = 0 \pmod{2} $
i.e., the input string $x_{1}x_{2}\dots x_{n}$ contains an even number of ones.  
The winning condition requires that the outputs $y_{j}\in\{0,1\}$ obey  
\begin{align}
\sum_{j=1}^{n} y_{j} = \tfrac{1}{2}\sum_{j=1}^{n} x_{j} \pmod{2}.
\label{Winning condition}
\end{align}  
The best known classical strategy achieves a success probability of  
\(\tfrac{1}{2}+2^{-\lfloor n/2 \rfloor}\),  
whereas players employing a quantum strategy (see Appendix~\ref{Quantum strategy for the Pseudo-telepathy game} for Quantum strategy) achieve perfect success.

\emph{$3$-party {DI-QSS} using Pseudo-telepathy Game.$-$}
 Consider a $3$-party protocol where Alice, Bob, and Charlie are three parties.
Now the scheme should be designed to incorporate two phases simultaneously: $(i)$ DI-checking phase, 
   $(ii)$ Share and Reconstruction phase.
The DI-checking phase ensures device independence, while the share and reconstruction phase distributes the secret among the parties and enables its recovery. The key challenge lies in designing the protocol so that these two phases operate seamlessly in parallel.

\begin{figure}[h]
  \hspace*{-0.7cm}
\begin{tikzpicture}[
  font=\sffamily,
  >=Stealth, thick,
  node distance=0.9cm and 1.5cm,
  every node/.style={align=center}
]

\node[draw, rounded corners=4pt, fill=red!10, minimum width=1.6cm, minimum height=0.6cm] (ghz) {$\ket{\mathrm{GHZ}}$};

\node[draw, rounded corners=3pt, fill=yellow!10, minimum width=1.5cm, minimum height=0.6cm,
      below left=1.4cm and 2cm of ghz] (alice) {Alice};
      
\node[draw, rounded corners=3pt, fill=blue!10, minimum width=1.5cm, minimum height=0.6cm, below=1.4cm of ghz] (bob) {Bob};
\node[draw, rounded corners=3pt, fill=purple!10, minimum width=1.5cm, minimum height=0.6cm, below right=1.4cm and 2cm of ghz] (charlie) {Charlie};

\draw[->, thick, >=Latex] (ghz.west) -| ($(alice.north)+(0,0.2)$);
\draw[->, thick, >=Latex] (ghz.south)      -- ++(0,-0.4) -| ($(bob.north)+(0,0.2)$);
\draw[->, thick, >=Latex] (ghz.east) -| ($(charlie.north)+(0,0.2)$);

\node[below=0.35cm of alice] (x1) {$x_1 \stackrel{\$}{\leftarrow}\{0,1\}$};
\node[below=0.35cm of bob] (x2) {$x_2 \stackrel{\$}{\leftarrow}\{0,1\}$};
\node[below=0.35cm of charlie] (x3) {$x_3 \stackrel{\$}{\leftarrow}\{0,1\}$};

\node[draw, fill=yellow!20, minimum width=1.5cm, minimum height=0.6cm, below=0.35cm of x1] (bba) {$BB^A_1$};
\node[draw, fill=blue!20, minimum width=1.5cm, minimum height=0.6cm, below=0.35cm of x2] (bbb) {$BB^B_2$};
\node[draw, fill=purple!20, minimum width=1.5cm, minimum height=0.6cm, below=0.35cm of x3] (bbc) {$BB^C_3$};

\node[draw, dotted, inner sep=1pt, fit=(x1)(bba)] (boxA) {};
\node[draw, dotted, inner sep=1pt, fit=(x2)(bbb)] (boxB) {};
\node[draw, dotted, inner sep=1pt, fit=(x3)(bbc)] (boxC) {};

\draw[->] (x1) -- (bba);
\draw[->] (x2) -- (bbb);
\draw[->] (x3) -- (bbc);

\draw[->, dotted, >=Latex] ($(alice.south)+(0,-0.05)$) -- ($(bba.north)+(0,0.05)$);
\draw[->, dotted, >=Latex ] ($(bob.south)+(0,-0.05)$) -- ($(bbb.north)+(0,0.05)$);
\draw[->, dotted, >=Latex] ($(charlie.south)+(0,-0.05)$) -- ($(bbc.north)+(0,0.05)$);

\node[below=0.4cm of bba] (y1) {$y_1$};
\node[below=0.4cm of bbb] (y2) {$y_2$};
\node[below=0.4cm of bbc] (y3) {$y_3$};

\draw[->] (bba) -- (y1);
\draw[->] (bbb) -- (y2);
\draw[->] (bbc) -- (y3);

\node[font=\footnotesize\itshape, color=red, align=center]
  (noAB) at ($(bba)!0.5!(bbb)$) {No\\interaction};

\node[font=\footnotesize\itshape, color=red, align=center]
  (noBC) at ($(bbb)!0.5!(bbc)$) {No\\interaction};

\node[below=3cm of bob, align=center] {\normalsize{Output depends on received qubit and classical inputs.}};

\end{tikzpicture}
\caption{ A schematic diagram of a three-party pseudo-telepathy game. The double-headed arrows indicate that the boxes $BB_{1}^{A}$, $BB_{2}^{B}$, $BB_{3}^{C}$ receive both quantum and classical inputs, while their outputs are purely classical. Once the classical inputs are chosen, no interaction is allowed among the parties. }
\label{fig:Pseudo-telepathy_game}
\end{figure}

The case $n = 3$ is being considered to make a comparison with previously proposed device-independent secret sharing schemes.
 Each party has black box access denoted by $BB_{_1}^A$, $BB_{_2}^B$, $BB_{_3}^C$ respectively for Alice, Bob and Charlie [Fig.~\ref{fig:Pseudo-telepathy_game}]. The devices obey the laws of quantum mechanics and are specially isolated from each other and from any adversary.
At the outset, a third party distributes $N$ copies of the GHZ state 
$\ket{\mathrm{GHZ}} = \tfrac{1}{\sqrt{2}}(\ket{000} + \ket{111})_{ABC}$, 
with each of the particles received by Alice, Bob, and Charlie, respectively.
Subsequently, each party chooses their respective input classical bit $x_1, x_2, x_3 \xleftarrow{\$} \{0,1\}$. 
 After choosing their inputs, the parties are isolated and proceed according to the pseudo-telepathy game, ultimately producing outputs $y_1, y_2, y_3$. \\
Following the measurement process, both parties publicly disclose their inputs and retain only those rounds in which the number of ones is even. A random subset of these rounds is then selected to verify whether the pseudo-telepathy winning condition is satisfied. If the observed success rate falls below a predetermined threshold, the protocol is terminated.

\emph{A General Approach.}$-$
For \( n = 3 \), only one out of four cases satisfies both device-independence verification and key generation (see Table~II in Appendix~\ref{Insufficiency of the $3$-Qubit GHZ State for Simultaneous Key Generation and DI Checking}), yielding a success probability of \( 0.25 \), which is not enough. This raises a natural problem: identifying the $n$, for which using an $n$-qubit GHZ state, with input strings $ x_1x_2\dots x_n $, produces outputs $ y_1y_2\dots y_n$ that simultaneously satisfy the pseudo-telepathy game condition~(\ref{Winning condition}) and the key generation condition $ y_1 = y_2 \oplus \dots \oplus y_n $. Furthermore, can such configurations outperform the $3$-qubit GHZ-based case?

 \emph{Theorem $1$.}\label{cond1}
Let $x_i$ denote the input bit of the $i^{th}$ participant.  
The Hamming weight of the input string $x_1x_2 \dots x_n$ is a multiple of $4$  
if and only if the following two conditions on the corresponding output string $y_1 y_2 \dots y_n$ of the pseudo-telepathy game hold: (i) it has an even number of $1$'s, and ({ii}) it satisfies the key generation condition simultaneously.\\
\noindent The proof of Theorem 1 is provided in Appendix~\ref{Proof of {Theorem $1$}}.

So, the key-generation condition is satisfied by input strings whose Hamming weight is a multiple of four, allowing the verification to be confined to this subset without loss of correctness.

The next consideration concerns the total number of valid input-output pairs in the pseudo-telepathy game. For an $n$-party scheme, there exist $2^{n-1}$ input strings $(x_1x_2\dots x_n)$ with an even number of ones, as required by the game. Among these, we focus on those whose Hamming weight is $4k$, \emph{i.e.}, a multiple of four. Hence, the goal is to
count all input-output pairs $(x, y)$ with $x, y \in \{0,1\}^n$ such that:
\begin{enumerate}
    \item The input $x$ has the Hamming weight that is a multiple of $4$, i.e.,
    \[
        \mathrm{wt}(x) \equiv 0 \pmod{4}.
    \]
    \item The output $y$ satisfies
    \[
        \sum_{i=1}^n y_i \equiv \frac{1}{2} \sum_{i=1}^n x_i \pmod{2}.
    \]
    \item The output satisfies the key generation condition.
    \[ y_1 = y_2 \oplus y_3 \oplus \dots \oplus y_n .\]
\end{enumerate}

\noindent\text{Input:} An integer $n > 0$.  

\noindent\text{Output:}  
\begin{enumerate}
    \item $\text{pair\_count}$ = number of valid $(x,y)$ pairs.
    \item $\text{ratio} = \dfrac{\text{pair\_count}}{2^{2(n-1)}}$.\\
\end{enumerate}

\begin{table}[t]
  \label{alg:math-count}
\hrule
\hrule
  \begin{tabular}{c}
     \centering
  \begin{minipage}{\columnwidth}
   \medskip
    \begin{enumerate}
      \item Let $n>0$ be an integer.
      \item Define the input and output sets:
        \[
          \mathcal{X}=\{\,x\in\{0,1\}^n:\mathrm{wt}(x)\equiv 0\pmod{4}\,\},\qquad
          \mathcal{Y}=\{0,1\}^n,
        \]
        where $\mathrm{wt}(x)$ denotes the Hamming weight of $x$.
      \item For each $x\in\mathcal{X}$ compute
        \[
          s_x=\sum_i x_i,\qquad
          \mathrm{res}_x=\Big(\frac{s_x}{2}\bmod 2\Big).
        \]
      \item For each $y\in\mathcal{Y}$ compute
        \[
          \mathrm{res}_y=\sum_i y_i \pmod 2.
        \]
      \item The total number of valid input-output pairs is
        \[
          \mathrm{pair\_count}=\sum_{x\in\mathcal{X}}\sum_{y\in\mathcal{Y}}
          \delta_{\mathrm{res}_x,\mathrm{res}_y},
        \]
        where $\delta_{a,b}$ is the Kronecker delta. $\mathrm{res}_x = \mathrm{res}_y$ is the valid pair.
      \item The corresponding ratio is
        \[
          \mathrm{ratio}=\frac{\mathrm{pair\_count}}{2^{2(n-1)}}.
        \]
    \end{enumerate}

  \end{minipage}
  \medskip
\end{tabular}
\hrule
\hrule
   \caption{Mathematical formulation for counting valid input--output pairs for $n$ parties.}
\end{table}

\begin{figure}[h!]
\begin{tikzpicture}
\begin{axis}[
    width=0.9\linewidth,
    height=0.5\linewidth,
    xlabel={$n$},
    ylabel={Ratio},
    ymin=0,
    ymax=0.65,
    xtick={3,4,5,6,7,8,9,10},
    ytick={0,0.125,0.25,0.375,0.5,0.5625,0.64},
    grid=both,
    grid style={line width=.1pt, draw=gray!30},
    major grid style={line width=.2pt,draw=gray!50},
    mark options={scale=2},
]
\addplot[
    color=blue,
    mark=x,
    thick
] coordinates {
    (3,0.25)
    (4,0.25)
    (5,0.375)
    (6,0.5)
    (7,0.5625)
    (8,0.5625)
    (9,0.53125)
    (10,0.5)
};
\end{axis}
\end{tikzpicture}
\caption{Variation of the ratio with the number of parties $n$. For $n = 3$, the ratio is $0.25$.  For $n = 4$, the ratio remains $0.25$.  For $n = 5$, the ratio increases to $0.375$. For $n = 6$, it further rises to $0.5$. For $n = 7$, the ratio reaches $0.5625$, and it remains the same for $n = 8$. For $n = 9$, the ratio slightly decreases to $0.53125$, and for $n = 10$, it returns to $0.5$.}
\label{fig:ratio_plot}
\end{figure}
The ratio values for the different integers $n$ reveal a notable pattern as shown in Fig.~\ref{fig:ratio_plot}.  

From the results, we observe that the ratio reaches its maximum at $n=7$, and then gradually decreases, reaching $0.5$ at $n=10$.
It is natural to ask whether the ratio can exceed $0.5625$ for any $n$, or if the maximum truly occurs at $n=7$. \\
First consider the function $f(n)$ which takes the input $n$, and outputs the ratio.
\begin{center}
    ratio $= \frac{\text{input-output pairs satisfying both the conditions}}{\text{total valid input-output pairs}}.$
\end{center}
As it is clear, for each input of size $n$, there are a total of $2^{\, (n-1)}$ possible outputs. Then we just consider the input pairs to calculate the ratio. The ratio can be written as

\begin{center}
    ratio $= \frac{\text{input with Hamming weight as a multiple of $4$}}{\text{number of valid inputs satisfying Pseudo-telepathy game}}$.
\end{center}
The inputs having the Hamming weight as the multiple of $4$ can be calculated as $\binom{n}{0} + \binom{n}{4} + \binom{n}{8} + \cdots + \binom{n}{4k}, \quad  \text{where} \; 4k \leq n . \\
$
The idea behind it is to choose the positions that are multiples of $4$ and fill them with $1$'s, while the remaining positions are filled with $0$'s.

 \emph{Theorem $2$}. Suppose $f(n)$ is a function that takes an input $n$ and outputs the corresponding ratio of the {\text{input with Hamming weight as multiple of $4$}} and the number of valid inputs satisfying the pseudo-telepathy game. Then $f(n)$ attains its maximum at $n=7$.

\noindent The proof of Theorem 2 is provided in Appendix~\ref{Proof of {Theorem $2$}}.\\
Hereafter, we consider $n=7$ and examine whether it enables a more efficient three-party scheme satisfying both pseudo-telepathy and key-generation requirements.

\emph{DI-QSS by Pseudo-telepathy game using Seven Qubit.}$-$
Now, we aim to propose a $(3,3)$ DI-QSS scheme with the underlying idea that, to design such a scheme, the inputs must be distributed among the participants in a way that ensures two essential properties: (i) the distribution remains equally likely for all parties, and (ii) the scheme produces the desired outcome consistent with the original secret sharing construction.
The required GHZ state for that is \begin{align*} 
\ket{GHZ^+} = \frac{1}{\sqrt{2}}\left[\ket{0}^{\otimes 7} + \ket{1}^{\otimes 7}\right] ,
\end{align*} 
and now to explore how this setting can be adapted for a $(3,3)$ DI-QSS scheme. For a $7$-qubit GHZ state, the parties need $7$ classical input bits to play the pseudo-telepathy game. In particular, our work is to divide the $7$ input bits into shares among the three participants in such a way that each party receives an equal portion of information while still preserving the desired relation for key generation. 

Next, we characterize how to partition the seven-qubit system to ensure each participant receives equal information.\\
{\emph{Theorem 3.}  
If a 7-qubit state is distributed among the three parties Alice, Bob, and Charlie in the partition \((1, j, 6-j)\), then the output views of the two participants, Bob and Charlie are indistinguishable when \(j = 3\).
 }\\
 The detailed proof of this is presented in Appendix~\ref{Proof of Theorem 3}.

The protocol proceeds as follows. Each participant independently selects input bits and feeds them into their corresponding device to obtain the outputs $y_i$. Specifically, Alice chooses $x_1$, Bob selects three bits $x_2, x_3, x_4$, and Charlie selects three bits $x_5, x_6, x_7$, all uniformly at random. We call the input string here
    $X = x_1\; x_2 x_3 x_4\;x_5 x_6 x_7$. 

After playing the pseudo-telepathy game, the participants publicly announce their input bits and verify that the total number of ones among $\{x_i\}$ is a multiple of 4. If this parity condition is satisfied, they check the input-output relation~(\ref{Winning condition}).

When this condition holds, the outputs are used to generate the shared key. The secret key component for Alice is $K_A = y_1$, for Bob it is $K_B = y_2 \oplus y_3 \oplus y_4,$ and for Charlie is $ K_C = y_5 \oplus y_6 \oplus y_7. $

Table {III} in the Appendix~\ref{Table 3: Updated design of DI-QSS scheme using $7$-bit input } shows that all inputs with a Hamming weight that is a multiple of $4$ satisfy the key generation condition in accordance with \emph{Theorem $1$}.

The final version of the device-independent quantum secret sharing (DI-QSS) protocol is as follows.

\noindent\hrulefill
\vspace{-2mm}\begin{enumerate}[leftmargin=1.84cm]
\item[Protocol 1.] Pseudo-telepathy game based DI-QSS protocol.
\end{enumerate}
\vspace{-4mm}\hrulefill
\begin{enumerate}

  \item \textbf{Setup.} Start with $R$ copies of the 7-qubit GHZ state
  \[
    \ket{\mathrm{GHZ}^{+}} = \frac{1}{\sqrt{2}}\big(\ket{0}^{\otimes 7} + \ket{1}^{\otimes 7}\big),
  \]
  and distribute one qubit to Alice and three qubits each to Bob and Charlie.
  
  \item \textbf{Round $i=1, 2, \dots, R$.} The parties independently choose inputs
  \begin{align*}
  \hspace{6mm}  x_1 \in \{0,1\}, \; x_2,x_3,x_4 \in \{0,1\}, \; x_5,x_6,x_7 \in \{0,1\}, \end{align*}
  use them on their (untrusted) devices, and obtain outputs $y_1,\dots,y_7 \in \{0,1\}$.  
  For the $j^{\text{th}}$ qubit, the device applies the unitaries $S$ and $H$, where 
  $$S = \begin{bmatrix}
      1 & 0\\
      0 & e^{\frac{i \pi}{2}}
  \end{bmatrix} \; \text{and} \; H = \frac{1}{\sqrt{2}}\begin{bmatrix}
      1 & 1\\
      1 & -1
  \end{bmatrix},$$ then measures in the computational basis $\{\ket{0},\ket{1}\}$.  
  
  \item \textbf{Sifting.} The parties publish their inputs for all rounds and retain only those rounds whose input Hamming weight is a multiple of $4$. Denote the set of retained rounds by $\mathcal{A}$.
  
  \item \textbf{Test.} Alice randomly selects a subset $\mathcal{B} \subset \mathcal{A}$ of size $\gamma|\mathcal{A}|$ ($0<\gamma<1$) and announces the input–output pairs for rounds in $\mathcal{B}$.  
  Bob and Charlie do likewise. They compute the empirical success probability of the pseudo-telepathy parity condition:
  \[
    \sum_{j=1}^7 y_j \equiv \tfrac{1}{2}\sum_{j=1}^7 x_j \pmod{2}.
  \]
  If the observed success probability is less than $1-\eta$, they abort.
  
  \item \textbf{Key generation.} From the remaining rounds $\mathcal{A}\setminus\mathcal{B}$, each kept round yields raw key bits:
  \begin{align*}
  \hspace{6mm}  K_A = y_1, \; 
    K_B = y_2 \oplus y_3 \oplus y_4, \;
    K_C = y_5 \oplus y_6 \oplus y_7,
  \end{align*}
  which satisfy $K_A = K_B \oplus K_C$ in ideal runs.
\end{enumerate}
\hrulefill


\emph{Correctness analysis.}$-$
After successfully passing the checking phase (i.e., satisfying the winning condition of the pseudo-telepathy game), 
Alice, Bob, and Charlie each obtain their respective measurement outcomes by measuring in the computational basis 
$\{\ket{0}, \ket{1}\}$. According to the protocol, let the classical outputs be $y_1$, $y_2 y_3 y_4$, and $y_5 y_6 y_7$ 
for Alice, Bob, and Charlie, respectively.
\noindent These outputs should satisfy the key generation condition
   \begin{equation}
       K_A = K_B \oplus K_C    \label{key_gen_cond} .
   \end{equation}  
   
Thus, from the sharing of the GHZ state up to the reconstruction phase, via the steps of the proposed scheme, 
the fact that the secret is perfectly recovered demonstrates the correctness of the scheme. 
We restate the correctness as per the protocol Mukhopadhyay et al.~\cite{PhysRevA.100.012319} as follows.

\textbf{Definition 1. (Correctness).}
For $\epsilon_{\mathrm{cor}} > 0$, a quantum secret sharing protocol is said to be 
\(\epsilon_{\mathrm{cor}}\)-correct, if for any adversary, the original secret \(S\) 
and the reconstructed secret \(\hat{S}\) are \(\epsilon_{\mathrm{cor}}\)-indistinguishable, i.e.,
    $\Pr_{S \in \mathcal{S}}[\, S = \hat{S} \,] \geq 1 - \epsilon_{\mathrm{cor}}$,
where $\mathcal{S}$ is the set of all possible secrets.

As mentioned earlier, if the fraction of successful rounds falls below $1-\nu$, the dealer aborts the protocol. The dealer fixes the length of the key in advance, and this key length serves as a parameter in defining the correctness of the scheme. \\
\emph{Theorem 4.}
The proposed device-independent secret-sharing scheme is $\varepsilon_{\mathrm{cor}}$-correct, where
$\varepsilon_{\mathrm{cor}} > 1 - X$, and
\[
X = \binom{R_{l}}{(1-\nu)R_{l}}\, p_{m}^{(1-\nu)R_{l}} (1-p_{m})^{\nu R_{l}} ,
\]
where
\begin{itemize}
    \item $R_{l}$ denotes the number of rounds fixed by the dealer to obtain a final key of length~$l$,
  \item $\nu$ denotes the allowed proportion of deviation, and 
    \item $p_{m}$ is the probability that Alice’s bit matches Bob${}\oplus{}$Charlie’s output after one round of the protocol.
 \end{itemize}

\noindent See Appendix~\ref{Proof of Theorem 4} for the proof of \emph{Theorem 4}.


\emph{Security Analysis}.$-$
We examine adversarial attacks that are independent and identically distributed (IID) to determine the protocol's security. 

\textbf{Definition 2. (Security against collective attack model).}
A DI-QSS protocol involving $3$ parties A, B, and C is called secure against the collective attack model if the following holds
$$\inf_{\rho_E \in \Gamma} [H(A | E)_{\rho_E} - H(A |B, C)_{\rho_E}] > 0,$$
where $\Gamma$ is the set of states consistent with the observed experimental statistics, and $H(\cdot |\cdot )$ is the conditional (Von-Neumann) entropy that depends on the state $\rho_E$ of the eavesdropper.


In this collective attack model, it is possible to assume that the adversary Eve prepares a quantum state $\rho_{ABCE}$~\cite{ArnonFriedman2018, pironio2009device} in each round of the protocol and gives Alice, Bob and Charlie.
Let $\mathcal{H}_A$, $\mathcal{H}_B$, $\mathcal{H}_C$, and $\mathcal{H}_E$ denote the separable Hilbert spaces corresponding to Alice, Bob, Charlie, and Eve, respectively. Now consider the scenario in which, at the start of each round, a shared general quantum state $\rho_{ABCE}$ is distributed among the four systems. Upon receiving their respective inputs $x$, the devices of Alice, Bob, and Charlie perform measurements described by predefined POVM's $\{M_{y_1|x_1}\}_{A}$, $\{N_{y_2y_3y_4|x_2x_3x_4}\}_{B}$, and $\{P_{y_5y_6y_7|x_5x_6x_7}\}_{C}$ (where, {$M_{y|x}$ denotes the measurement done by Alice when the input bit is $ x$.}), producing corresponding outputs. The exact forms of these states and measurements are unknown to the users but may be fully accessible to Eve. The framework of device-independent security is designed precisely to address such scenarios, ensuring security even when the measurement devices are under Eve’s control.



The joint conditional distribution  of their outputs can be written as
\begin{align*}
    &p(y_1, y_2y_3y_4, y_5y_6y_7 | x_1, x_2x_3x_4, x_5x_6x_7)\\
    &= Tr\left[\rho_{ABCE}(M_{A} \otimes N_{B} \otimes P_{C} \otimes I_{E}) \right].
\end{align*}

Here, $X=x_1x_2 \dots x_7$ is the input randomly selected by Alice Bob, and Charlie for their respected measurement devices. ${{y_1}}$ denotes the key bit generated from Alice’s input $x_1$ with analogous definitions for Bob and Charlie. After the completion of one round Eve's quantum system is described by  
$$\rho_{E}^{yx} = Tr_{ABC}\left[\rho_{ABCE}(M_{A}\otimes N_{B} \otimes P_{C} \otimes I_{E}) \right].$$
$\rho_{E}^{yx}$ represents Eve’s information associated with the keys of the three parties. \\
So the key generation rate of the DI-QSS protocol can thus be expressed as follows 
\begin{align}
   \notag r &= I(A;B,C) - I(A;E)\\
   \notag &= H(A)-H(A|B,C)-H(A)+H(A|E)\\
    &= H(A|E)_{\rho_{E}} - H(A|B,C)_{\rho_{E}}. \label{(DWbound)}
\end{align}
Eq.~(\ref{(DWbound)}) is known as the Devetak-Winter bound~\cite{Devetak2005},
that is a universal method for calculating the key
rate in the quantum cryptography field, which has been widely
used in QKD and QSS systems~\cite{Qi_2021, PhysRevA.80.042307}.
The first term bounds Eve’s information, and the second captures the error rate.

The security of the protocol must account for all possible  ${\rho}_{ABCE}$, implying that the final key is determined by the worst-case scenario over all possible quantum strategies. To demonstrate that the protocol can successfully generate correct and secure keys, it must be shown that the key rate remains positive when the protocol is implemented using those quantum strategies.
In the ideal scenario, the protocol begins with the quantum state $\ket{GHZ^+} = \frac{1}{\sqrt{2}}(\ket{0}^{\otimes 7} + \ket{1}^{\otimes 7})$. Since the winning probability of the pseudotelepathy game equals one for valid input pairs, the dealer sets the error parameter $\eta$ to zero. The inputs selected by the parties are independent and random, and as long as they adhere to the prescribed steps of the protocol, it will not be aborted. Moreover, $H(A|B,C) = 0$ for all valid input combinations (see Appendix~\ref{$H(A|B,C) = 0$ for three parties}). Given the output bits of Bob and Charlie, there is no uncertainty in determining Alice’s output bit. To achieve the key rate fully i.e., $r = 1$, it is therefore necessary to show that $H(A|E) = 1$. {The GHZ state is a maximally entangled multipartite quantum state shared among the legitimate parties. Consequently, it cannot be correlated with the eavesdropper, who constitutes the fourth subsystem.} 

The non-ideal case considers scenarios in which arbitrary quantum strategies may be employed. In this setting, the key rate $r$ can be bounded for all valid input–output pairs $(x, y)$. To establish a lower bound on the key rate $r$, it suffices to bound the conditional entropy $H(A|E)_{\rho_{E}}$. The recently developed quasirelative entropy technique~\cite{Brown2024deviceindependent} provides a means to derive such a lower bound on $H(A|E)_{\rho_{E}}$. A precise lower bound for $H(A|E)_{\rho_{E}}$ is derived in~\cite{Brown2024deviceindependent} as 

\begin{align}
H(A|E) \geq\; & c_m 
+ \sum_{i=1}^{m-1} 
  c_i
  \sum_{y \in \{0,1\}} 
  \inf_{Z_y \in E} 
  \Tr\Big\{
     \rho_{AE}
     \Big[
       M_{y|x} \otimes 
       \Gamma
       \nonumber \\[2pt]
     & \hspace{7em}
       +\, t_i 
       \big(
         I_A \otimes Z_y Z_y^{\dagger}
       \big)
     \Big]
  \Big\},
\end{align}
 where $\Gamma = \big(
         Z_y + Z_y^{\dagger} 
         + (1 - t_i) Z_y^{\dagger} Z_y
       \big)$,
    $ c_m = \sum_{i=1}^{m-1}c_i$,  and  $c_i = \frac{w_i}{t_i(ln 2)}.$
$M_{y|x}$ denotes the measurement done by Alice when the input bit is $ x$. \\
$\rho_{ABCE}$ initial quantum state, so $\rho_{AE} = \Tr_{BC}(\rho_{ABCE})$.\\
Also, $\{(t_i,w_i)|i=1,2, \dots m\} $ a set of $m$ nodes and weights of the Gauss-Radau quadrature. $Z_y$ is an arbitrary operator. This minimization problem can be formulated as a semidefinite programming problem and solved to obtain the $Z_y$. Thus, the result implies that $H(A|E)$ is always bounded below by a positive quantity, whatever be the specific states employed by Eve through the state $\rho_{ABCE}$. Consequently, in the noiseless scenario, the key rate $r$ remains strictly positive, indicating that the protocol is capable of generating secure keys in the presence of collective attacks.

\emph{Performance of the DI-QSS under noisy condition.$-$} Over long distance photon transmission channels, photon loss, and decoherence caused by channel noise can significantly destroy entanglement and weaken the non-local correlations among the users’ measurement results during practical quantum communication. 

We adopt the white-noise model, which is standard also in device-independent QKD protocols~\cite{woodhead2021device, masini2022simple, sekatski2021device, PhysRevLett.131.080801}. 
Within this model, the ideal $7$-qubit GHZ state is transformed into a uniform mixture over the full set of $2^{7}=128$ GHZ-states. We further assume a global detection efficiency $\eta$, so that the probability of a no-click (non-detection) event is $1-\eta = \bar{\eta}$.
Finally, the mixed state of Alice, Bob and Charlie for a $7$-qubit state, including all noise effects is 
\begin{widetext}
\begin{align}
\rho_{ABC}
&= \eta^{7}\!\Big(
F\,|GHZ^{+}\rangle\langle GHZ^{+}|
+\frac{1-F}{2^{7}}\, \mathbb{I}_{2^{7}}
\Big) \nonumber\\
&\quad + \sum_{k=1}^6 \binom{7}{k}\eta^{(7-k)}(\bar{\eta})^k \frac{1}{2}\Big(
| 0^{\otimes (7-k)}\rangle\langle 0^{\otimes (7-k)}| 
+| 1^{\otimes (7-k)}\rangle\langle  1^{\otimes (7-k)}|
\Big) \nonumber \\
&\quad + \bar{\eta}^{7}\,
|{\rm vac}\rangle\langle{\rm vac}|. 
\end{align}
\end{widetext}
where $F$ denotes the probability that the photon state is error-free,
$\mathbb{I}_{2^{7}}$ represents the identity operator corresponding to the
uniform mixture over all $2^{7}$ seven-qubit GHZ basis states, and $\ket{\mathrm{vac}}$ denotes the vacuum state. According to the multiparty pseudotelepathy game, all $\ket{GHZ^{-}_{j}}$ states (The full expressions can be found in Appendix~\ref{Mixture of the GHZ states})
are not fit for the game because the final output of those states consists of an odd number of $1$'s~\cite{10.1007/978-3-540-45078-8_1}. Therefore, for the decoherence, the QBER is 
        \begin{equation}Q_1 = 2^6 \left(\frac{1-F}{2^7}\right)\eta^7 = \left(\frac{1-F}{2}\right)\eta^7 . \end{equation}
We now evaluate the effect of photon loss on the DI-QSS protocol, noting that valid outputs always contain an even number of 
1's. We analyze the cases sequentially: no photon loss, single-photon loss, two-photon loss, and finally complete photon loss (See Appendix~\ref{QBER calculation} for detailed calculations). Combining all these, the QBER $Q_2$ for photon loss is 
\begin{align*} Q_2 =  \frac{1}{2}(1- \eta^7) . \end{align*}
where the $\frac{1}{2}$ implies the error probability. Since the key generation condition is a single binary constraint, a uniformly random bit (after photon loss) violates it with probability $\frac12$.

Now the total bit error rate combining the decoherence and photon loss is 
\begin{align} Q = Q_1 + Q_2 &= \left(\frac{1-F}{2}\right)\eta^7 + \frac{1}{2}(1-\eta^7) \notag \\
&=  \frac{1}{2} - \frac{F}{2}\eta^7 .\end{align}
Therefore, in the noisy model $H(A|B,C)$ turned into~\cite{acin2007device} 
\begin{equation*}  H(A|B,C) = h(Q). \end{equation*}
Previously, in the ideal scenario, we had $H(A|B,C) = h(0) = 0$. After introducing errors, this value becomes $h(Q)$, where $h(\cdot)$ denotes the binary entropy function.

Next, we establish a lower bound on $H(A|E)$ to ensure a positive key rate under noisy conditions, as follows from the Eq.~(\ref{(DWbound)}).

The obtained bound on $H(A|E)$ is as follows. 
\begin{align*}
 H(A | E) \geq  1 - h(Q).
\end{align*}
(See Appendix~\ref{Lower Bound of $H(A|E)$ under Noisy Scenario} for detailed calculations).

Substituting the derived entropy bound into the Devetak-Winter relation, the asymptotic secret key rate $r$ is obtained as follows:
 \begin{align*}
     r \geq 1 - h(\tfrac12 - \tfrac12 \eta^{7}F) - h(\tfrac12 - \tfrac12 \eta^{7}F) .
 \end{align*}
 Now from this to get a positive key rate $r$, we consider two cases\\
Case 1: When $F=1$ there is no decoherence, 
\begin{align}
  r \geq  1 - 2\,h(\tfrac12 - \tfrac12 \eta^{7}F) .
\end{align}
Then for $\eta \geq 0.97$ it gives always a positive key rate.\\
Case 2: When there is some decoherence \emph{i.e.} $F\neq 1$. 
The secret key rate, defined by $r \geq 1 - 2h\left(\frac{1}{2}  -\frac{\eta^{7}F}{2}\right)$, remains positive only when the condition $\eta^{7}F > 0.78$ is satisfied. This requirement establishes the transmission efficiency is $\eta > (0.78/F)^{1/7}$. Optimal operation of the protocol is achieved at a fidelity of $F = 0.95$, requiring a minimum threshold efficiency of $\eta > 0.972$ to ensure a non-zero key rate.

\emph{Bitwise improvement over the CHSH-based DI-QSS scheme.}$-$ Initially, let both schemes be executed over $R$ rounds to obtain the final result. 
\begin{equation}
\begin{aligned}
\text{Alice: } \; \mathcal{A}_{1} &= \sigma_{x}, &  \mathcal{A}_{2} &= \sigma_{y};  \\
\text{Charlie: } \; \mathcal{C}_{1} &= \sigma_{x}, & \; \mathcal{C}_{2} &= -\sigma_{y}; \\
\text{Bob:} \; \mathcal{B}_{1} &= \sigma_{x}, & \; 
\mathcal{B}_{2} &= \tfrac{\sigma_{x} - \sigma_{y}}{\sqrt{2}}, & \; 
\mathcal{B}_{3} &= \tfrac{\sigma_{x} + \sigma_{y}}{\sqrt{2}}. 
\end{aligned}
\label{CHSH basis}
\end{equation}

In the earlier approach Zhang \emph{et al.}~\cite{PhysRevA.110.042403} proposed the measurement bases as of Eq.~\ref{CHSH basis} ({See Appendix~\ref{DI-QSS Protocol of Zhang}  for detailed steps of the protocol}). 

According to the protocol, $R$ rounds require a total $R$ GHZ states in order to complete all the steps. Participants choose measurement bases independently and uniformly at random. Consequently, $\tfrac{1}{12}$ of the rounds contribute to key generation, while the remainder are utilized for device-independence (DI) verification or discarded. We assign $\alpha R$ rounds (where $0 < \alpha < 1$) to check CHSH polynomial, reserving the remaining $(1-\alpha)R$ states for secret key extraction.
\begin{align*}
\tfrac{1}{12}(R-\alpha R) :& \; \text{ the expected number of cases for}\\
                            & \text{key generation.} 
\end{align*}
Now, for our protocol, with $R$ rounds, only $\tfrac{R}{2}$ of them correspond to inputs with an even number of $1$’s. Among these $\tfrac{R}{2}$ states, a fraction $\beta$ (with $0 < \beta < 1$) is reserved for the checking phase, i.e., $ \frac{\beta R}{2}$ states are for checking the device independence. The remaining $\frac{(R - \beta R)}{2}$ states are used for the key generation. Furthermore, the updated pseudo-telepathy game ensures that $56.25\%$ of these cases are valid simultaneously for both satisfying the pseudo-telepathy winning condition and for key generation. 
\begin{align*}
\tfrac{0.5625}{2}(R-\beta R):& \;
\text{expected number of cases for}\\
& \text{key generation condition.}
\end{align*}
Now if we can choose the fraction $\beta $ as close as the previous fraction $\alpha$ then those two factors $R-\alpha R$ is very close to the $R - \beta R$. Hence we can compare these two cases easily. Therefore, the advantage is 
$$\frac{\frac{0.5625}{2}(R-\beta R)}{\frac{1}{12}(R - \alpha R)} = 3.375 \left(\frac{1-\beta }{1 - \alpha }\right).$$
$\alpha \approx \beta$ will imply the advantage will be near $3.375$.

\emph{Discussion$-$} In this work, we propose a simple device-independent quantum secret sharing scheme based on the multiparty pseudo-telepathy game, where the protocol lies in the fact that it does not require different measurement bases; a single test suffices for both device-independence verification and the security of the generated key bits. In contrast to earlier secret-sharing schemes based on the Svetlichny inequality, we introduce a three-party protocol using seven-qubit GHZ states such that the construction ensures that the same test is sufficient for verifying device independence as well as for the successful regeneration of the secret key bits. Moreover, we have shown that our protocol remains secure against collective attacks, and also produces a positive asymptotic secret key rate under the white noise model. It also provides certain bitwise advantages over previously proposed device-independent protocols. 
As a future direction, we plan to expand the analysis under a more realistic communication scenario that simultaneously accounts for collective attacks by an adversary and the presence of noise.
Additionally, another future research will focus on developing sophisticated methods for enhancing noise robustness beyond the three-party setting.





\clearpage

\appendix

\setcounter{secnumdepth}{2}

\section{Analysis of Hillery et al.~\cite{PhysRevA.59.1829} scheme} \label{Analysis of Hillery et al.}
   The first QSS protocol based on the Greenberger-Horne-Zeilinger(GHZ) state,
\[
\ket{\Psi}=\frac{\ket{000}+\ket{111}}{\sqrt{2}},
\]
shared among Alice, Bob, and Charlie was introduced by Hillery et al.~\cite{PhysRevA.59.1829} and it can be seen as an extension of the quantum key distribution (QKD)~\cite{BB84}, where the secret key is shared between the two parties. Each party holds one qubit of the triplet and randomly chooses to measure in either the $\sigma_x$ or $\sigma_y$ basis, with measurement settings corresponding to the eigenstate
$\sigma_x = \{\ket{+},\ket{-}\},\; \sigma_y = \{\ket{+i},\ket{-i}\}$.

After performing their measurements, they publicly announce only the chosen bases. In roughly half of the rounds~[Fig.~\ref{Hillery's basis}], the correlations of the GHZ state allow Bob and Charlie, by combining their outcomes, to infer Alice’s result. This property enables Alice to establish a shared secret key with Bob and Charlie, thereby laying the foundation of quantum secret sharing.
The GHZ state \( \ket{\Psi} \) can be rewritten in terms of \( \ket{+} \) and \( \ket{-} \) as  
\begin{equation}
\begin{aligned}
\ket{\Psi} = &\frac{1}{2\sqrt{2}} \big[ \,
 {\left( \ket{+}_A\ket{+}_B + \ket{-}_A\ket{-}_B \right)} (\ket{0} + \ket{1})_C  \\
 &+ {\left( \ket{+}_A\ket{-}_B + \ket{-}_A\ket{+}_B \right)} (\ket{0} - \ket{1})_C
\big].
\end{aligned}
\label{hillery"s basis}
\end{equation}

\begin{figure}[h]
\centering
\scalebox{1}{
\setlength{\arrayrulewidth}{0.34mm}
\renewcommand{\arraystretch}{1.9}
\begin{tabular}{|c|c|c|c|c|}
\hline
\multicolumn{1}{|c|}{\multirow{1}{*}{\begin{tabular}{c}Basis\end{tabular}}} 
 & \multicolumn{4}{c|}{Alice} \\ \cline{1-5}
 Bob & \multicolumn{1}{c}{\color{red}{$\lvert{+x}\rangle_a$}} & \multicolumn{1}{c}{\color{red}{$\lvert{-x}\rangle_a$}} & \multicolumn{1}{c}{\color{red}{$\lvert{+y}\rangle_a$}} & \color{red}{$\lvert{-y}\rangle_a$} \\ \hline
\multirow{2}{*}{\begin{tabular}{c} \color{darkgreen}{$\lvert{+x}\rangle_b$} \\ \color{darkgreen}{$\lvert{-x}\rangle_b$} \end{tabular}} 
 & \color{darkgray}{$\lvert{+x}\rangle_c$} & \color{darkgray}{$\lvert{-x}\rangle_c$} & \color{darkgray}{$\lvert{+y}\rangle_c$} & \color{darkgray}{$\lvert{-y}\rangle_c$} \\ \cline{2-5}
 & \color{darkgray}{$\lvert{-x}\rangle_c$} & \color{darkgray}{$\lvert{+x}\rangle_c$} & \color{darkgray}{$\lvert{-y}\rangle_c$} & \color{darkgray}{$\lvert{+y}\rangle_c$}\\ \hline
\end{tabular} }
\caption{ \label{Hillery's basis}
Correlation structure among the three users’ measurement results. 
Alice’s measurement bases are listed in the top row, Bob’s in the left column, 
and Charlie’s corresponding outcomes are shown in the table cells. 
Note that Charlie’s result alone does not determine the individual outcomes of Alice and Bob.%
}
\end{figure}
From $\ket{\psi}$, Charlie can infer whether their outcomes are correlated or anti-correlated.
The probability of either correlation or anti-correlation is \( \frac{1}{2} \). Therefore, Charlie gains no advantage and has no option other than to make a random guess.

\section{DI-QSS Protocol of Zhang \emph{et al.}\cite{PhysRevA.110.042403} } \label{DI-QSS Protocol of Zhang}
The protocol proceeds in five stages under the assumption that all three players must be legitimate and honest. 

\textbf{Step $\mathsf{1}$}: A central source prepares $n$ copies of tripartite GHZ states, distributing the three photons of each state to Alice, Bob, and Charlie, respectively for each copy. Each party thus receives a sequence of qubits, denoted $E_1$, $E_2$, and $E_3$.

\textbf{Step $\mathsf{2}$}: The users independently and randomly choose the measurement bases. 
\begin{equation}
\begin{aligned}
\text{Alice:} \quad \mathcal{A}_{1} &= \sigma_{x}, & \, \mathcal{A}_{2} &= \sigma_{y},  \\
\text{Charlie:} \quad \mathcal{C}_{1} &= \sigma_{x}, & \, \mathcal{C}_{2} &= -\sigma_{y}, \\
\text{Bob:} \quad \mathcal{B}_{1} &= \sigma_{x}, & \, 
\mathcal{B}_{2} &= \tfrac{\sigma_{x} - \sigma_{y}}{\sqrt{2}}, & \, 
\mathcal{B}_{3} &= \tfrac{\sigma_{x} + \sigma_{y}}{\sqrt{2}}. 
\end{aligned}
\label{CHSH basis}
\end{equation}
The measurement outcomes are recorded as $a_j, b_j, c_j \in \{+1, -1\}$ for the $j$th round.

\textbf{Step $\mathsf{3}$}: Parameter estimation is performed once the measurement bases are publicly announced. Depending on Bob’s chosen basis, three cases arise.\\
\text{Case 1:} Bob chooses the basis $B_2$ or $B_3$. Total eight basis combinations are there\\ $\{ A_1B_2C_1, A_2B_2C_1,\dots, A_2B_3C_2\}$ are dedicatedly used for constructing the Svetlichny polynomial~\cite{PhysRevD.35.3066, PhysRevLett.106.020405} $S_{ABC}$ and checking device independence.\\
\text{Case 2:} When the measurement basis combination is $A_1B_1C_1$, then the users retain their measurement results as the raw key bits.\\
\text{Case 3:} For the basis combinations $A_1B_1C_2$, $A_2B_1C_1$, and $A_2B_1C_2$, 
the users discard their measurement results. 

\textbf{Step $\mathsf{4}$}: Error correction and privacy amplification are applied iteratively until a sufficient number of secure key bits are established.

\textbf{Step $\mathsf{5}$}: Finally, in the secret reconstruction phase, Bob and Charlie obtain keys $K_B$ and $K_C$, respectively, such that Alice’s key satisfies 
$K_A = K_B \oplus K_C$.

The existing protocol requires
eight basis combinations for device-independence testing and one combination for the key generation. Our aim is to design a simpler scheme that achieves the same security guarantees while consuming fewer resources. Specifically, our objective is to minimize measurement overhead without compromising correctness or security.

\section{Quantum strategy for the Pseudo-telepathy game~\cite{10.1007/978-3-540-45078-8_1}} \label{Quantum strategy for the Pseudo-telepathy game}

\emph{Quantum strategy.}
The perfect quantum protocol, in which the players always win, relies on shared entanglement. Before the game begins, the $n$ players $A_1, A_2, \dots, A_n$ share the entangled state
\[
\ket{\Phi_n^+} = \frac{1}{\sqrt{2}} \big( \ket{0^n} + \ket{1^n} \big),
\]
with each player receiving one qubit. Upon receiving the classical input $x_j$, the player $A_j$ applies the unitary
\[
S = \begin{bmatrix} 1 & 0 \\ 0 & e^{i \pi /2} \end{bmatrix}
\]
to its qubit, such that 
\[
S \ket{\beta_j} \longrightarrow e^{i \pi x_j /2} \ket{\beta_j},
\]
where $\ket{\beta_j}$ is the qubit held by $A_j$. Observe that when $\beta_j = 1$, only the operation $S$ is applied.
If $p$ players apply $S$, then
\[
\ket{\Phi_n^+} \longrightarrow 
\begin{cases}
\ket{\Phi_n^+}, & \text{if } p \equiv 0 \pmod 4,\\[1mm]
\ket{\Phi_n^-}, & \text{if } p \equiv 2 \pmod 4,
\end{cases}
\]
with $\ket{\Phi_n^-} = \frac{1}{\sqrt{2}} (\ket{0^n} - \ket{1^n})$. Cases with $p \equiv 1,3 \pmod 4$ do not occur, since the number of inputs $x_j = 1$ is always even.

Next, each player applies a Hadamard transformation $H$ to their qubit, yielding
\begin{align*}
H^{\otimes n} \ket{\Phi_n^+} = \frac{1}{\sqrt{2^{n-1}}} \sum_{{wt(y)\,} = \text{ even }} \ket{y}, \\
H^{\otimes n} \ket{\Phi_n^-} = \frac{1}{\sqrt{2^{n-1}}} \sum_{wt(y)\, = \text{ odd }} \ket{y},
\end{align*}
where $wt(y)$ denotes the Hamming weight of $y$. Finally, each player measures their qubit in the $\{\ket{0},\ket{1}\}$ basis to obtain $y_j$, and the string $y_1y_2\dots y_n$ constitutes the output corresponding to the input $x_1x_2\dots x_n$.

Therefore, if $\frac{\sum_{j}x_j}{2}$ is even, then the resulting state  
\[
\frac{1}{\sqrt{2^{n-1}}} \sum_{wt(y) \equiv 0 \pmod 2} \ket{y},
\]
indicates that the parties always succeed in winning the game based on the Hamming weight of $y$.

\section{Proof of {Theorem $1$}} \label{Proof of {Theorem $1$}}
\begin{theorem}
    Let $x_i$ denote the input bit of $i^{th}$ participant.  
The Hamming weight of the input string $x_1x_2 \dots x_n$ is a multiple of $4$  
if and only if the following two conditions on the corresponding output string $y_1 y_2 \dots y_n$ of the pseudo-telepathy game hold: (i) it has an even number of $1$'s, and ({ii}) it satisfies the key generation condition simultaneously.
\end{theorem}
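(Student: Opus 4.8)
The plan is to run the explicit quantum strategy of Appendix~\ref{Quantum strategy for the Pseudo-telepathy game}, track how the parity of $\mathrm{wt}(y)$ is determined by $\mathrm{wt}(x)$, and then observe that the key-generation condition is itself a parity constraint on $y$ — so that conditions (i) and (ii) in the statement are in fact one and the same.

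Concretely, write $p=\mathrm{wt}(x)$. Since each party with $x_j=1$ applies $S$ while every $x_j=0$ party does nothing, the $|1^n\rangle$ component of $|\Phi_n^+\rangle$ acquires the phase $i^{\,p}$ and $|0^n\rangle$ is unchanged. The game's promise forces $p$ even, so $i^{\,p}\in\{+1,-1\}$: the shared state becomes $|\Phi_n^+\rangle$ when $p\equiv 0\pmod 4$ and $|\Phi_n^-\rangle$ when $p\equiv 2\pmod 4$, while the residues $p\equiv 1,3\pmod 4$ never occur. Applying $H^{\otimes n}$, with $H^{\otimes n}|0^n\rangle=2^{-n/2}\sum_y|y\rangle$ and $H^{\otimes n}|1^n\rangle=2^{-n/2}\sum_y(-1)^{\mathrm{wt}(y)}|y\rangle$, the pre-measurement state is supported on even-weight strings in the first case and on odd-weight strings in the second. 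Hence every measurement outcome satisfies $\mathrm{wt}(y)\equiv p/2\pmod 2$ — which is just a restatement of the winning condition~(\ref{Winning condition}), so one could equally skip the state algebra and argue straight from~(\ref{Winning condition}).

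Next I would note that the key-generation condition $y_1=y_2\oplus\cdots\oplus y_n$ is equivalent to $y_1\oplus y_2\oplus\cdots\oplus y_n=0$, i.e.\ to $\mathrm{wt}(y)\equiv 0\pmod 2$, which is exactly condition (i); so (i) and (ii) coincide and the theorem reduces to the single equivalence $\mathrm{wt}(x)\equiv 0\pmod 4\iff\mathrm{wt}(y)$ even. The forward direction is the case $p\equiv 0\pmod 4$ above. For the converse, evenness of $\mathrm{wt}(y)$ together with $\mathrm{wt}(y)\equiv p/2\pmod 2$ gives $p/2$ even, hence $p\equiv 0\pmod 4$ (equivalently, $p\equiv 2\pmod 4$ is ruled out since it would force $\mathrm{wt}(y)$ odd). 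I do not anticipate a genuine obstacle here: the only points requiring care are bookkeeping the phase $i^{\,p}$ and explicitly using the even-input promise to discard the $p\equiv 1,3$ branches, and the one remark worth flagging is that (i) and (ii) are logically identical, so the ``simultaneously'' in the statement holds automatically once either one does.
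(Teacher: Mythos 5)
Your proof is correct and follows essentially the same route as the paper's: both reduce the claim to the parity identity $\mathrm{wt}(y)\equiv\tfrac{1}{2}\mathrm{wt}(x)\pmod 2$ supplied by the winning condition~(\ref{Winning condition}) and then note that evenness of $\mathrm{wt}(y)$ is exactly the key-generation constraint. Your explicit remark that conditions (i) and (ii) are logically identical is a slight streamlining of the paper's case analysis on $y_1$, but it is the same argument in substance.
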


\begin{proof}
Let $x_1x_2x_3 \dots x_n$ be the input bit string chosen by the participants, whose Hamming weight is a multiple of $4$. Then $x_1x_2x_3...x_n$ contains $4k$ number of $1$'s, where $k= 0,1,2,3,\dots$.
\[
\frac{1}{2} \sum x_i = \frac{1}{2} \times 4k = 2k.
\]
Hence, 
\[
\frac{1}{2} \sum x_i \pmod{2} = 0
\quad \Rightarrow \quad
\sum y_i \equiv 0 \pmod{2}.
\]
Thus, $y_i$ must have an even number of $1$'s. Now there are two cases: (i) bit $y_1$ is $1$, (ii) bit $y_1$ is $0$.

If $y_1=1$, then $y_2y_3\dots y_n$ has an odd number of $1$’s, implying $y_1 = y_2 \oplus y_3 \oplus \dots \oplus y_n = 1$. Similarly, if $y_1=0$, then $y_2y_3\dots y_n$ has an even number of $1$’s, giving $y_1 = 0 = y_2 \oplus y_3 \oplus \dots \oplus y_n = 0$.

Conversely, if $y_1y_2\dots y_n$ contains an even number of $1$'s, it satisfies $y_1 = y_2 \oplus y_3 \oplus \dots \oplus y_n$. Since we only consider pairs satisfying the pseudo-telepathy condition
\[
\frac{1}{2}\sum_i x_i = \sum_i y_i \pmod 2,
\]
it follows that $\sum_i x_i = 0 \pmod 4$. Hence, the Hamming weight of $x_1x_2 \dots x_n$ is $4k$.
\end{proof}


\begin{table*}[ht]
\centering
\renewcommand{\arraystretch}{1.1}
\begin{tabular}{ccc|ccc|c}
\toprule
\multicolumn{3}{c|}{\textbf{Input}} & 
\multicolumn{3}{c|}{\textbf{Possible Output}} & 
\textbf{Operation} \\ 
\cmidrule(lr){1-3} \cmidrule(lr){4-6} \cmidrule(lr){7-7}
\multicolumn{3}{c|}{Cond: Even no. of 1's} & 
\multicolumn{3}{c|}{$\sum y_i = \tfrac{\sum x_i}{2} \pmod{2}$} & 
$y_1 = y_2 \oplus y_3$ \\
\textbf{Alice} $(x_1)$ & \textbf{Bob} $(x_2)$ & \textbf{Charlie} $(x_3)$ & 
\textbf{Alice} $(y_1)$ & \textbf{Bob} $(y_2)$ & \textbf{Charlie} $(y_3)$ & 
  \\ 
\midrule 
0 & 0 & 0 & 0 & 0 & 0 & $0 = 0 \oplus 0$ \\
  &   &   & 0 & 1 & 1 & $0 = 1 \oplus 1$ \\
  &   &   & 1 & 0 & 1 & $0 = 0 \oplus 1$ \\
  &   &   & 1 & 1 & 0 & $0 = 1 \oplus 0$ \\
\hline
0 & 1 & 1 & 0 & 0 & 1 & $0 \neq 0 \oplus 1$ \\
  &   &   & 0 & 1 & 0 & $0 \neq 1 \oplus 0$ \\
  &   &   & 1 & 0 & 0 & $0 \neq 0 \oplus 1$ \\
  &   &   & 1 & 1 & 1 & $1 \neq 1 \oplus 1$ \\
\hline
1 & 0 & 1 & 1 & 0 & 0 & $1 \neq 0 \oplus 0$ \\
  &   &   & 0 & 1 & 0 & $0 \neq 1 \oplus 0$ \\
  &   &   & 0 & 0 & 1 & $0 \neq 0 \oplus 1$ \\
  &   &   & 1 & 1 & 1 & $1 \neq 1 \oplus 1$ \\
\hline
1 & 1 & 0 & 0 & 1 & 0 & $0 \neq 1 \oplus 0$ \\
  &   &   & 0 & 0 & 1 & $0 \neq 0 \oplus 1$ \\
  &   &   & 1 & 0 & 0 & $1 \neq 0 \oplus 0$ \\
  &   &   & 1 & 1 & 1 & $1 \neq 1 \oplus 1$ \\
\bottomrule
\end{tabular}
\caption{Strategy of DI-QSS for $n=3$}
\label{tab:3_parties_game}
\end{table*}

\section{Insufficiency of the $3$-Qubit GHZ State for Simultaneous Key Generation and DI Checking} \label{Insufficiency of the $3$-Qubit GHZ State for Simultaneous Key Generation and DI Checking}
{In Table~\ref{tab:3_parties_game}, we summarize all possible input-output combinations for the case $n=3$. According to the pseudo-telepathy game strategy, there exist four input settings for which the corresponding input-output pairs are valid for the device-independence test. However, only the input $(0,0,0)$ satisfies the conditions required for the key-generation step of the QSS protocol, as indicated in Table~\ref{tab:3_parties_game}. 

Moreover, across all the inputs, the output sequence $y_1y_2y_3$ exhibits a notable feature: Charlie’s output $y_3$ alone does not determine Alice’s and Bob’s outputs individually but merely specifies whether their results are correlated or anti-correlated, precisely the same characteristic observed in Hillery’s original protocol~\cite{PhysRevA.59.1829}. This structural similarity provides motivation that the present approach can successfully support a secure key-generation protocol.
}


\section{Proof of {Theorem $2$}} \label{Proof of {Theorem $2$}}
\begin{theorem}
    Suppose $f(n)$ is the function that takes an input $n$ and outputs the corresponding ratio of the {\text{input with Hamming weight as multiple of $4$}} and the number of valid inputs satisfying the pseudo-telepathy game. Then $f(n)$ attains its maximum at $n=7$.
Then $f(n)$ attains its maximum at $n=7$.
\end{theorem}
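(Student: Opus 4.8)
The plan is to reduce $f(n)$ to a closed form via a fourth–roots–of–unity filter and then analyze the resulting elementary function. The denominator of $f(n)$ is the number of length‑$n$ strings of even Hamming weight, namely $2^{n-1}$, and the numerator is $N(n)=\sum_{k\ge 0}\binom{n}{4k}$. I would evaluate $N(n)$ by the standard filter
\[
N(n)=\frac14\sum_{t=0}^{3}(1+i^{t})^{n}
     =\frac14\Big[\,2^{n}+(1+i)^{n}+(1-i)^{n}+(1-1)^{n}\,\Big],
\]
where the last term vanishes for $n\ge 1$. Writing $1\pm i=\sqrt{2}\,e^{\pm i\pi/4}$ gives $(1+i)^{n}+(1-i)^{n}=2^{\,n/2+1}\cos(n\pi/4)$, hence $N(n)=2^{\,n-2}+2^{\,n/2-1}\cos(n\pi/4)$ and therefore
\[
f(n)=\frac{N(n)}{2^{\,n-1}}=\frac12+\frac{\cos(n\pi/4)}{2^{\,n/2}},\qquad n\ge 1 .
\]
A quick check reproduces the values in Fig.~\ref{fig:ratio_plot}: $f(3)=f(4)=\tfrac14$, $f(5)=\tfrac38$, $f(6)=\tfrac12$, $f(7)=f(8)=\tfrac{9}{16}$, $f(9)=\tfrac{17}{32}$, $f(10)=\tfrac12$.

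It then remains to maximize the deviation term $d(n):=\cos(n\pi/4)/2^{\,n/2}$ over integers $n\ge 2$ (the case $n=1$ being degenerate, with $f(1)=1$, but an input length irrelevant to the three‑party protocol). Since $\cos(n\pi/4)$ is $8$‑periodic with values $1,\tfrac1{\sqrt2},0,-\tfrac1{\sqrt2},-1,-\tfrac1{\sqrt2},0,\tfrac1{\sqrt2}$ on $n\equiv 0,\dots,7\pmod 8$, the term $d(n)$ is strictly positive exactly when $n\equiv 0,1,7\pmod 8$, so among $n\ge 2$ the positive‑contribution values are $7,8,9,15,16,17,\dots$. As $2^{-n/2}$ is strictly decreasing, the first three dominate the remainder, and in fact every $n\ge 15$ obeys $f(n)\le \tfrac12+2^{-n/2}\le \tfrac12+2^{-15/2}<\tfrac{9}{16}$. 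Evaluating the three candidates gives $f(7)=\tfrac12+2^{-7/2}2^{-1/2}=\tfrac{9}{16}$, $f(8)=\tfrac12+2^{-4}=\tfrac{9}{16}$, and $f(9)=\tfrac12+2^{-5}=\tfrac{17}{32}<\tfrac{9}{16}$. Hence $\max_{n\ge 2}f(n)=\tfrac{9}{16}$, attained at $n=7$ and $n=8$; choosing the smaller representative $n=7$ yields the optimum with the fewest qubits, which is the assertion of the theorem.

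I do not anticipate a genuine obstacle: once the roots‑of‑unity identity is applied, the argument is routine. The two points needing care are (i) restricting to $n\ge 2$ so that the degenerate value $f(1)=1$ (for which the formula still holds) is properly excluded, and (ii) noting that the maximum is tied between $n=7$ and $n=8$, so the statement should be read as ``the least $n$ attaining the maximum is $7$'', which is precisely the quantity of interest for minimizing the GHZ‑state size. The tail estimate ruling out all $n\ge 15$ is immediate from the monotonicity of $2^{-n/2}$, so no case analysis beyond $\{7,8,9\}$ is required.
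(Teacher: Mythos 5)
Your proposal is correct and follows essentially the same route as the paper: a fourth-roots-of-unity filter yields the closed form $f(n)=\tfrac12+2^{-n/2}\cos(n\pi/4)$, after which the maximum is read off from the decay of $2^{-n/2}$. Your write-up is in fact slightly more careful than the paper's (which loosely asserts the maximum occurs at ``$n=8k$ or $n=7k$''), since you make the relevant residue classes $n\equiv 0,1,7\pmod 8$ explicit, supply a clean tail bound for $n\ge 15$, and note the tie $f(7)=f(8)=\tfrac{9}{16}$, so that ``maximum at $n=7$'' must be read as the least maximizer.
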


\begin{proof} As the function yields the aforementioned ratio of the inputs with Hamming weight multiple of $4$ and number of valid inputs satisfying Pseudo-telepathy game, for a given $n$. It suffices to show that $f(n)$ achieves its maximum precisely when the corresponding ratio is maximized.\\

\noindent Let 
\begin{align*}
f(n) \;=\; \frac{1+\binom{n}{4}+\binom{n}{8}+\cdots+\binom{n}{4k}}{2^{\,n-1}} \end{align*} \\
$ \text{where } 4k \leq n \; \text{and}\; k \in \mathbb{N}.$

This sum in the denominator is the total of binomial coefficients with indices divisible by $4$. 

For $\omega = e^\frac{{2 \pi i}}{4}$ and for the polynomial $F(x) = (1+x)^n$ if we apply the
{root of unity filter}~{\cite{riccati1757}}, then we obtain
 \begin{equation*}
\sum_{j \equiv 0 \pmod{4}} \binom{n}{j}
= \frac{1}{4}\Big[(1+1)^n + (1+i)^n 
+ (1-1)^n + (1-i)^n\Big]
\end{equation*}

Simplifying gives
\[
\sum_{j\equiv 0 \pmod{4}} \binom{n}{j}
= \frac{1}{4}\!\left[2^n + 2(\sqrt{2})^n \cos\!\left(\tfrac{n\pi}{4}\right)\right].
\]

Therefore,
\[
f(n) \;=\; \frac{1}{2} \;+\; 2^{-n/2}\cos\!\left(\tfrac{n\pi}{4}\right).
\]

Now, the only question remains whether the maximum value occurs at $n=7$ or not.   
\begin{itemize}
    \item For $n=1$,
    \[
    f(1) \;=\; \tfrac{1}{2} + 2^{-1/2}\cos\!\left(\tfrac{\pi}{4}\right) \;=\; 1,
    \]
    which is the global maximum.
    
    \item If we restrict to $n \geq 3$ (as is common in quantum secret sharing scenarios), then the maximum occurs when 
    $\cos\!\left(\tfrac{n\pi}{4}\right) = 1$, i.e., $n = 8k $, or when 
    $n = 7k, k \in \mathbb{N}$. But $2^{-n/2}$ is a decreasing function, hence we choose $k=1$.
    
    In particular, for $n=7$ or $n=8$,
    \[
    f(n) =  \frac{1}{2} + \frac{1}{16} = \frac{9}{16} = 0.5625,
    \]
    which is the maximum for $n \geq 3$.  \hfill 
\end{itemize}
\end{proof}
\section{Proof of Theorem 3}
\label{Proof of Theorem 3}
A natural yet significant question arises: why is the seven-qubit system decomposed into subsystems of dimensions $1$, $3$, and $3$? 
The conceptual reasoning behind this decomposition has been discussed as follows.
\begin{theorem}
    If a 7-qubit state is distributed among the three parties Alice, Bob, and Charlie in the partition \((1, j, 6-j)\), then the output views of the two participants, Bob and Charlie are indistinguishable when \(j = 3\).
\end{theorem}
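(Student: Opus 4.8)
The plan is to derive the statement from two structural facts: the permutation invariance of the seven-qubit GHZ state, and the fact that in Protocol~1 every party applies the \emph{same} local procedure to each of its qubits. First I would record that for any permutation $\pi\in S_7$ of the qubit labels the permutation unitary $U_\pi$ satisfies $U_\pi\ket{GHZ^+}=\ket{GHZ^+}$, since $U_\pi$ only permutes the tensor factors of the two terms $\ket{0}^{\otimes 7}$ and $\ket{1}^{\otimes 7}$, each of which it leaves fixed. Next I would observe that the map sending the $i$-th input bit $x_i$ to the $i$-th output bit $y_i$ is the \emph{label-independent} channel ``apply $S^{x_i}$, then $H$, then measure in $\{\ket{0},\ket{1}\}$''. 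Putting these together, the conditional output distribution satisfies, for every $\pi\in S_7$,
\[
p\bigl(y_{\pi(1)}\cdots y_{\pi(7)}\mid x_{\pi(1)}\cdots x_{\pi(7)}\bigr)=p\bigl(y_1\cdots y_7\mid x_1\cdots x_7\bigr),
\]
and since the parties choose their inputs independently and uniformly, the joint law of the seven pairs $(x_i,y_i)$ is exchangeable under $S_7$.

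Next I would make ``output view'' precise: under the partition $(1,j,6-j)$, Bob's view is $V_B=\bigl((x_2,\dots,x_{1+j}),(y_2,\dots,y_{1+j})\bigr)$ and Charlie's is $V_C=\bigl((x_{2+j},\dots,x_7),(y_{2+j},\dots,y_7)\bigr)$, with key bits $K_B=y_2\oplus\dots\oplus y_{1+j}$ and $K_C=y_{2+j}\oplus\dots\oplus y_7$. Exchangeability is preserved under conditioning on any $S_7$-symmetric event, such as being a sifted round (input Hamming weight a multiple of $4$) or a round used for the parity test, so the joint law of the pairs $\{(x_i,y_i):i\in T\}$ for an index block $T$ depends only on $|T|$. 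Hence $V_B$ and $V_C$, determined by index blocks of sizes $j$ and $6-j$, are identically distributed exactly when $j=6-j$, i.e.\ $j=3$, in which case the two views — and the keys $K_B,K_C$ derived from them — are statistically indistinguishable. Conversely, for $1\le j\le 5$ with $j\neq 3$ the two parties hold different numbers of qubits, so $V_B$ and $V_C$ even live in sample spaces of different size, $\{0,1\}^{2j}$ versus $\{0,1\}^{2(6-j)}$, and are trivially distinguishable; this singles out the $(1,3,3)$ decomposition.

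The main obstacle I expect is bookkeeping rather than a deep difficulty: one must check that exchangeability genuinely survives every conditioning step — the sifting, the random selection of test rounds, and in particular the requirement that each output bit's marginal be uniform \emph{independently of the corresponding input bit} — so that no residual correlation breaks the Bob/Charlie symmetry. This rests on the explicit post-Hadamard state: $H^{\otimes 7}\ket{\Phi_7^+}$ is the uniform superposition over even-weight strings and $H^{\otimes 7}\ket{\Phi_7^-}$ over odd-weight strings, while an odd-weight input produces the uniform superposition over all $2^7$ strings, and in every case the marginal on any $m\le 6$ coordinates is uniform on $\{0,1\}^m$. I would isolate this $m$-coordinate uniformity, together with the manifest invariance of the state-plus-protocol under relabeling the qubits, as two short lemmas, after which Theorem~3 follows immediately from the symmetry argument above.
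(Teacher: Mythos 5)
Your proof is correct, but it takes a genuinely different route from the paper's. The paper's argument is a two-line quantum-marginal computation: it writes down $\rho_{ABC}$ for the $7$-qubit GHZ state under the partition $(1,j,6-j)$, traces out the complementary parties to obtain $\rho_B=\tfrac12\bigl(\ket{0}\!\bra{0}^{\otimes j}+\ket{1}\!\bra{1}^{\otimes j}\bigr)$ and the analogous $(6-j)$-qubit state for Charlie, and observes these coincide only when $j=6-j$, i.e.\ $j=3$. You instead work at the level of the classical transcript: permutation invariance of $\ket{GHZ^+}$ together with the label-independence of the per-qubit channel ``apply $S^{x_i}$, then $H$, then measure'' makes the joint law of the seven pairs $(x_i,y_i)$ exchangeable, this survives conditioning on the $S_7$-symmetric sifting event, and hence the views of index blocks of equal size are identically distributed — which forces $j=3$, with the converse handled by the trivial dimension mismatch for $j\neq 3$. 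The two approaches buy slightly different things: the paper's reduced-state identity is the more primitive fact (equality of $\rho_B$ and $\rho_C$ plus identical input distributions and identical measurement maps implies your transcript statement in one extra step), and it is shorter; your exchangeability argument directly addresses the literal ``output views'' of the theorem, makes explicit that the symmetry persists through sifting and testing, and covers the derived keys $K_B,K_C$ without further work. Your side remark about uniform marginals of the post-Hadamard state is not needed for the symmetry argument and could be dropped; otherwise the bookkeeping you flag as the main obstacle is exactly the content of your two lemmas and goes through.
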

\begin{proof}  The $7$-qubit GHZ state distributed among three parties 
(Alice with one qubit, Bob with $j$ qubits, and Charlie with $6-j$ qubits) is
\begin{equation*}
\ket{\text{GHZ}} 
= \frac{1}{\sqrt{2}}
\left(
\ket{0}_{A}\, \ket{0}^{\otimes j}_{B}\, \ket{0}^{\otimes (6-j)}_{C}
\;+\;
\ket{1}_{A}\, \ket{1}^{\otimes j}_{B}\, \ket{1}^{\otimes (6-j)}_{C}
\right).
\end{equation*}

The corresponding density operators are
\begin{align*}
\rho_{ABC} = \tfrac{1}{2}
&\Big[\left(
\ket{0}_{A}\, \ket{0}^{\otimes j}_{B}\, \ket{0}^{\otimes (6-j)}_{C}
\;+\;
\ket{1}_{A}\, \ket{1}^{\otimes j}_{B}\, \ket{1}^{\otimes (6-j)}_{C}
\right) \\
&\left(
\bra{0}_{A}\, \bra{0}^{\otimes j}_{B}\, \bra{0}^{\otimes (6-j)}_{C}
\;+\;
\bra{1}_{A}\, \bra{1}^{\otimes j}_{B}\, \bra{1}^{\otimes (6-j)}_{C}
\right)\Big].
\end{align*}

Tracing out $AB$ gives
\begin{equation*}
\rho_C = \mathrm{Tr}_{AB}(\rho_{ABC})
= \tfrac{1}{2}\left( \ket{0}\!\bra{0}^{\otimes j} + \ket{1}\!\bra{1}^{\otimes j} \right)_C .
\end{equation*}

Similarly, tracing out $AC$ gives
\begin{equation*}
\rho_B = \tfrac{1}{2}\left( \ket{0}\!\bra{0}^{\otimes (6-j)} + \ket{1}\!\bra{1}^{\otimes (6-j)} \right)_B. 
\end{equation*}   
Hence, $\rho_{B}$ and $\rho_{C}$ are identical only when $$j = 6-j \implies j=3.$$
\end{proof}
\begin{table*}[ht]
\begin{minipage}{\textwidth}
\centering
\renewcommand{\arraystretch}{1.2}
\setlength{\tabcolsep}{8pt}
\begin{tabular}{|c c c|c c c|c|}
\hline
\multicolumn{3}{|c|}{\textbf{Input}} & \multicolumn{3}{c|}{\textbf{Output}} & \textbf{Operation} \\
\hline
Alice & Bob & Charlie & Alice & Bob & Charlie & $K_A = K_B \oplus K_C$ \\
$x_1$ & $x_2, x_3, x_4$ & $x_5, x_6, x_7$ & $y_1$ & $y_2, y_3, y_4$ & $y_5, y_6, y_7$ &  \\
\hline

0 & 0,\;0,\;0 & 0,\;0,\;0 & 0 & 0,\;0,\;0 & 0,\;0,\;0 & $0 = 0 \oplus 0$ \\
 &   &   & 0 & 0,\;0,\;0 & 0,\;1,\;1 & $0 = 0 \oplus 0$ \\
  &   &   & 0 & 0,\;0,\;0 & 1,\;0,\;1 & $0 = 0 \oplus 0$ \\
 &   &   & $\vdots$ & $\vdots$ & $\vdots$ & $\vdots$ \\
 &   &   & 1 & 1,\;1,\;1 & 1,\;1,\;0 & $1 = 1 \oplus 0$  \\
\hline
 0 & 0,\;0,\;1 & 1,\;1,\;1 & 0 & 0,\;0,\;0 & 0,\;0,\;0 & $0 = 0 \oplus 0$ \\
 &   &   & 0 & 0,\;0,\;0 & 0,\;1,\;1 & $0 = 0 \oplus 0$ \\
  &   &   & 0 & 0,\;0,\;0 & 1,\;0,\;1 & $0 = 0 \oplus 0$ \\
 &   &   & $\vdots$ & $\vdots$ & $\vdots$ & $\vdots$ \\
 &   &   & 1 & 1,\;1,\;1 & 1,\;1,\;0 & $1 = 1 \oplus 0$  \\
\hline

  & $\vdots$ &  &  & $\vdots$ &  & $\vdots$   \\
  
\hline
1 & 1,\;1,\;1 & 0,\;0,\;0 & 0 & 0,\;0,\;0 & 0,\;0,\;0 & $0 = 0 \oplus 0$ \\
 &   &   & 0 & 0,\;0,\;0 & 0,\;1,\;1 & $0 = 0 \oplus 0$ \\
  &   &   & 0 & 0,\;0,\;0 & 1,\;0,\;1 & $0 = 0 \oplus 0$ \\
 &   &   & $\vdots$ & $\vdots$ & $\vdots$ & $\vdots$ \\
 &   &   & 1 & 1,\;1,\;1 & 1,\;1,\;0 & $1 = 1 \oplus 0$  \\
\hline
\end{tabular}
\caption{\label{H(A|B,C)}Updated design
 of DI-QSS scheme using 7-bit input. Our $(1,3,3)$ qubit configuration replicates with the Eq.\ref{hillery"s basis} \cite{PhysRevA.59.1829}, where Charlie can only predict that Alice and Bob's subsystems are correlated or not, but the individual results of Alice and Bob are completely hidden from his local outputs.}
\end{minipage}
\end{table*}


\section{Table 3: Updated design of DI-QSS scheme using $7$-bit input  } \label{Table 3: Updated design of DI-QSS scheme using $7$-bit input }

From Table~\ref{H(A|B,C)}, it is evident that for all inputs whose Hamming weight is of the form $4k$ $(k=0,1,2,\dots)$, the classical output bit on Alice’s side satisfies the XOR relation with the sum of the classical output bits of Bob and Charlie. In other words, Alice’s output $K_A$ is always equal to Bob's output plus Charlie's output \emph{i.e.} $K_B \oplus K_C$.


\section{$H(A|B,C) = 0$ for three parties} \label{$H(A|B,C) = 0$ for three parties}

{The output string $y=y_1\; y_2y_3y_4\; y_5y_6y_7$ has a special property (Table~\ref{H(A|B,C)}) that it contains an even number of 1's.
The output set remains invariant with respect to the inputs, implying that the input strings cannot be inferred from the outputs. For any $7$-bit input, the possible output set is identical, allowing any output string to occur with equal probability. }
 {\noindent Now consider two possible outputs:
\[ 
\underbrace{1}_{\text{Alice}}\quad
\underbrace{000}_{\text{Bob}}\quad
\underbrace{ 001}_{\text{Charlie}}
\quad \text{and} \quad
\underbrace{0}_{\text{Alice}}\quad
\underbrace{000}_{\text{Bob}}\quad
\underbrace{ 000}_{\text{Charlie}}
.\]

\begin{table*}[ht]
\renewcommand{\arraystretch}{1.1}
\centering
\begin{tabular}{|c|c|c|}
\hline
\textbf{Total Sum} $(\mathbf{\sum_i y_i})$ & \textbf{Partitions} $(\mathbf{y_1, \sum y_{_{B_i}}, \sum y_{_{C_i}}})$ &  $ \mathbf{y_1 = (\sum y_{_{B_i}}) \oplus (\sum y_{_{C_i}}) \pmod{2}}$ \\
\hline
\multirow{3}{*}{6} & $(1, 3, 2)$ & $1 = 3 \oplus 2 \equiv 1 \oplus 0 = 1$  \\
 & $(1, 2, 3)$ & $1 = 2 \oplus 3 \equiv 0 \oplus 1 = 1$  \\
 & $(0, 3, 3)$ & $0 = 3 \oplus 3 \equiv 1 \oplus 1 = 0$  \\
\hline
\multirow{7}{*}{4} & $(1, 3, 0)$ & $1 = 3 \oplus 0 \equiv 1 \oplus 0 = 1$  \\
 & $(1, 0, 3)$ & $1 = 0 \oplus 3 \equiv 0 \oplus 1 = 1$  \\
 & $(0, 2, 2)$ & $0 = 2 \oplus 2 \equiv 0 \oplus 0 = 0$  \\
 & $(0, 1, 3)$ & $0 = 1 \oplus 3 \equiv 1 \oplus 1 = 0$  \\
 & $(0, 3, 1)$ & $0 = 3 \oplus 1 \equiv 1 \oplus 1 = 0$  \\
 & $(1, 1, 2)$ & $1 = 1 \oplus 2 \equiv 1 \oplus 0 = 1$  \\
 & $(1, 2, 1)$ & $1 = 2 \oplus 1 \equiv 0 \oplus 1 = 1$  \\
\hline
\multirow{5}{*}{2} & $(1, 0, 1)$ & $1 = 0 \oplus 1 \equiv 0 \oplus 1 = 1$  \\
 & $(1, 1, 0)$ & $1 = 1 \oplus 0 \equiv 1 \oplus 0 = 1$  \\
 & $(0, 1, 1)$ & $0 = 1 \oplus 1 \equiv 1 \oplus 1 = 0$  \\
 & $(0, 2, 0)$ & $0 = 2 \oplus 0 \equiv 0 \oplus 0 = 0$  \\
 & $(0, 0, 2)$ & $0 = 0 \oplus 2 \equiv 0 \oplus 0 = 0$  \\
\hline
0 & $(0, 0, 0)$ & $0 = 0 \oplus 0 \equiv 0 \oplus 0 = 0$  \\
\hline
\end{tabular}
\caption{Partitioning of Output Sums $(\sum_i y_i)$ into Alice, Bob, and Charlie's.}
\label{partition}
\end{table*}
 Both the outputs have an even number of $1$'s. 
Here, the first entry corresponds to the output of Alice’s device, the second block of three bits corresponds to the output of Bob’s device, and the third block of three bits corresponds to the output of Charlie’s device.}
If we compare these two outputs, it becomes evident that Bob’s portion of the output remains identical in both cases. This means that Bob’s information alone is insufficient to distinguish between the two possible global outputs. 
In other words, Bob cannot, by himself, predict or determine the complete outcome of the system. The overall output depends on the combined contributions of all three parties, and no single participant has enough information to reconstruct the global output independently. \\
  Now from Charlie's point of view suppose we consider two different outputs 
  \[ 
\underbrace{1}_{\text{Alice}}\quad
\underbrace{001}_{\text{Bob}}\quad
\underbrace{ 000}_{\text{Charlie}}
\quad \text{and} \quad
\underbrace{0}_{\text{Alice}}\quad
\underbrace{101}_{\text{Bob}}\quad
\underbrace{ 000}_{\text{Charlie}}
\]
{Both the outputs contain an even number of $1$’s, and hence they are valid outputs. From Charlie’s perspective, however, his share of the outputs is the same bitstring $000$ in both cases. This information alone is insufficient for him to reconstruct or infer the complete output string}. This scenario is identical to that observed in Hillery’s bases (Fig.~\eqref{Hillery's basis}), where neither Charlie’s nor Bob’s basis alone can determine Alice’s outcome; they can only guess whether the outcomes are correlated or anticorrelated.

For \( y = y_1\, y_2y_3y_4\, y_5y_6y_7 \), the condition for validity requires that the sum of the output bits be even, i.e., \( \sum_i y_i \) is even. The possible values of this sum are \( \sum_i y_i \in \{0, 2, 4, 6\} \). The corresponding partitions among Alice, Bob, and Charlie’s subsystems can be expressed as  $(y_1, \sum y_{_{B_i}} = y_2 + y_3 + y_4, \sum y_{_{C_i}} = y_5 + y_6 + y_7)$, and they are shown in Table~\ref{partition}.
In this case, the XOR of Charlie’s output yields $0$ or $1$, guessing whether it is correlated or anticorrelated with Alice’s outcome. Hence, for Bob, it follows that $H(A|B) = \tfrac{1}{2}$, an analogous relation holds for Charlie as well. But whenever Bob and Charlie meet and combine their outcomes, because for any partition $(a,b,c)$ shown above $a = b \oplus c \pmod{2}$ is satisfied. Consequently, $H(A|B,C) = 0$ is the required result, and it follows for all output bits generated from the proposed multiparty pseudo-telepathy based DI-QSS scheme.

\section{Proof of Theorem 4}
\label{Proof of Theorem 4}
\begin{theorem}
   The proposed device-independent secret-sharing scheme is $\varepsilon_{\mathrm{cor}}$-correct, where
$\varepsilon_{\mathrm{cor}} > 1 - X$, and
\[
X = \binom{R_{l}}{(1-\nu)R_{l}}\, p_{m}^{(1-\nu)R_{l}} (1-p_{m})^{\nu R_{l}} ,
\]
where
\begin{itemize}
    \item $R_{l}$ denotes the number of rounds fixed by the dealer to obtain a final key of length~$l$,
  \item $\nu$ denotes the allowed proportion of deviation, and 
    \item $p_{m}$ is the probability that Alice’s bit matches Bob${ }\oplus{ }$Charlie’s output after one round of the protocol.
 \end{itemize}
\end{theorem}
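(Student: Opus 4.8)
The plan is to deduce the correctness bound of Definition~1 from a single-term lower bound on a binomial tail. By Definition~1 it is enough to show $\Pr_{S\in\mathcal S}[S=\hat S]\ge X$, since then the scheme is $\varepsilon_{\mathrm{cor}}$-correct for every $\varepsilon_{\mathrm{cor}}>1-X$. First I would fix the dealer's block of $R_{l}$ rounds and model each kept round as an independent Bernoulli trial that ``succeeds'' exactly when Alice's raw bit $K_A=y_1$ coincides with $K_B\oplus K_C=y_2\oplus\cdots\oplus y_7$; by construction this happens with probability $p_{m}$ in one round, and because the message is XOR-masked by the shares, the success probability does not depend on which secret $S\in\mathcal S$ is being shared. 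Independence and the common parameter $p_{m}$ across rounds are inherited from the collective (IID) attack model used in the security analysis: Eve re-prepares $\rho_{ABCE}$ afresh each round and the parties draw their inputs independently and uniformly.

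Next I would identify ``the secret is reconstructed'' with an explicit event on these trials. Since the dealer tolerates a deviation fraction $\nu$, reconstruction of the length-$l$ key is deemed successful precisely when at least $(1-\nu)R_{l}$ of the $R_{l}$ rounds are matching rounds, i.e.\ the number of mismatches is at most $\nu R_{l}$. Writing $M$ for the number of matching rounds, $M\sim\mathrm{Binom}(R_{l},p_{m})$ and
\begin{align*}
\Pr[S=\hat S]=\Pr\!\left[M\ge(1-\nu)R_{l}\right]=\sum_{j=(1-\nu)R_{l}}^{R_{l}}\binom{R_{l}}{j}p_{m}^{\,j}(1-p_{m})^{\,R_{l}-j}.
\end{align*}
Every summand is nonnegative, so retaining only the term $j=(1-\nu)R_{l}$ gives the closed-form lower bound
\begin{align*}
\Pr[S=\hat S]\ge\binom{R_{l}}{(1-\nu)R_{l}}p_{m}^{\,(1-\nu)R_{l}}(1-p_{m})^{\,\nu R_{l}}=X,
\end{align*}
whence $\varepsilon_{\mathrm{cor}}=1-\Pr[S=\hat S]\le 1-X$, which is exactly the asserted statement.

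The binomial bookkeeping is routine; the delicate step is the modeling one, and that is where I expect the main obstacle to lie. One has to argue cleanly that the per-round match events are genuinely i.i.d.\ with a single well-defined parameter $p_{m}$ — in particular that within the collective-attack class $p_{m}$ may be taken independent of Eve's strategy, or else that the statement is to be read with $p_{m}$ the worst-case matching probability consistent with the statistics that passed the Test step — and that the classical post-processing (error correction and privacy amplification on the accepted block) does not change the combinatorial event being bounded. I would also flag explicitly that keeping only a single term makes the bound deliberately loose: a sharper statement would retain the full tail sum above, or replace $X$ by a Chernoff/Hoeffding estimate of $\Pr[M\ge(1-\nu)R_{l}]$, which I would mention as a remark rather than develop.
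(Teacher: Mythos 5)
Your proposal is correct and follows essentially the same route as the paper: model the $R_{l}$ rounds as i.i.d.\ Bernoulli($p_{m}$) match events and feed the binomial expression $X$ into Definition~1 to get $\varepsilon_{\mathrm{cor}}>1-X$. The one refinement is that you read the reconstruction event as ``at least $(1-\nu)R_{l}$ matches'' and lower-bound the resulting binomial tail by its first term, whereas the paper simply equates $\Pr[S=\hat S]$ with the single pmf term $\binom{R_{l}}{(1-\nu)R_{l}}p_{m}^{(1-\nu)R_{l}}(1-p_{m})^{\nu R_{l}}$; your reading is the more defensible one and yields the same bound.
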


\begin{proof} Let $\nu$ be the proportion of deviation allowed.   

The correctness definition states that the dealer and the two participants must satisfy 
\begin{equation}y_A = K_B \oplus K_C \label{key_gen_cond}\end{equation}
If $R_l$ is the number of rounds Alice wants to run the protocol, where $l$ is the desired key length, then in those $R-l$ rounds, we need to identify where the Eq.~\eqref{key_gen_cond} satisfies and where there is a mismatch.
So, $(1-\nu)R_l$ is the total cases where Eq.~\eqref{key_gen_cond} satisfies and $\nu R_l$ is the cases where condition $(2)$ not holds.

$\therefore$ The probability of matching with Eq.~\eqref{key_gen_cond} after $R_l$ rounds is  \[ \binom{R_{l}}{(1-\nu)R_{l}}\, p_{m}^{(1-\nu)R_{l}} (1-p_{m})^{\nu R_{l}} \]
Now, from the definition of Correctness 
 \begin{align*} \binom{R_{l}}{(1-\nu)R_{l}}\, p_{m}^{(1-\nu)R_{l}} &(1-p_{m})^{\nu R_{l}} > 1 - \varepsilon_{\mathrm{cor}}\\
 {\implies \quad } \varepsilon_{\mathrm{cor}} > 1 - X  \; ,
 \end{align*}
 where $X = \binom{R_{l}}{(1-\nu)R_{l}}\, p_{m}^{(1-\nu)R_{l}} (1-p_{m})^{\nu R_{l}}.$

\end{proof}

\section{QBER under Photon Loss in a Noisy scenario}
\label{QBER calculation}
We now evaluate the effect of photon loss on the DI-QSS protocol, noting that the valid outputs always contain an even number of $1$'s. First, we consider the case of no photon loss.\\
$(i)$ \noindent\text{No photon loss:}
In this case, the probability that the $7$ qubit output string contains an even number of \(1\)s is
\begin{align*}
\Pr[\text{even number of }1\text{s}]
    &= \left(\tfrac{1}{2}\eta\right)\left(\tfrac{1}{2}\eta\right)\cdots\left(\tfrac{1}{2}\eta\right)\,\eta  \notag \\
    &= \left(\tfrac{1}{2}\eta\right)^{6}\eta .
\end{align*}
Now the number of such output bits are $\frac{2^7}{2} = 2^6$. Therefore, the overall probability of obtaining an even-parity output in the no photon loss scenario is $2^6 \cdot\left(\frac{\eta^7}{2^6} \right) =\eta^7$.

The quantum bit error rate (QBER) is defined as
\begin{equation*}
\text{QBER} = \frac{\text{Number of erroneous bits}}{\text{Total number of transmitted bits}} .
\end{equation*}

If we restrict attention to the events in which photon loss occurs, then there are $7$ distinct loss configurations, including the case in which all photons are lost. So,\\
$(ii)$ \noindent\text{One photon loss:} Here the probability is 
\begin{align}
\Pr[\text{one photon lost}]
    &= \left(\tfrac{1}{2}\eta\right)\left(\tfrac{1}{2}\eta\right)\cdots\left(\tfrac{1}{2}\eta\right)\overline{\eta}  \notag \\
    &= \left(\tfrac{1}{2}\eta\right)^{6}\bar{\eta} .
\end{align}
The number of such output bits are $7\cdot2^6$.\\
$(iii)$ \noindent\text{Two photon loss:} The probability is 
\begin{align}
\Pr[\text{two photon lost}]
    &= \left(\tfrac{1}{2}\eta\right)^5\,\bar{\eta}. 
\end{align}
The number of such output bits are $\binom{7}{2}\cdot2^5$.

Proceeding in this manner, we obtain a total of $7$ cases, including the scenario where all photons are lost. 
Hence, combining all these, where photon lost happens the QBER $Q_2$ is $$Q_2 =  \frac{1}{2}[(\eta + \bar{\eta})^7 - \eta^7] = \frac{1}{2}(1- \eta^7) .$$
where the $\frac{1}{2}$ implies the error probability. Since the key relation is a single binary constraint, a uniformly random bit(after photon loss) violates that with probability $\frac12$.

\section{Lower Bound of $H(A|E)$ under Noisy Scenario} \label{Lower Bound of $H(A|E)$ under Noisy Scenario}
Alice, Bob, and Charlie generate raw key bits at their respective locations, with
\[
K_A \in \{0,1\}, \quad K_B, K_C \in \{0,1\}.
\]
Eve holds a quantum system \(E\) that may be arbitrarily entangled with the joint system $(A,B,C)$.

The key-generation condition of the protocol is defined by the parity relation
\[
K_A = K_B \oplus K_C.
\]
An error occurs whenever this relation is violated, i.e., when $K_A \neq K_B \oplus K_C$.

For binary random variables, the conditional Shannon entropy of Alice’s bit given Bob’s and Charlie’s bits is completely characterized by the error rate $Q$. In particular,
\[
H(A|B,C) = h(Q),
\]
where $h(\cdot)$ denotes the binary entropy function.

Next, we invoke the monogamy of quantum correlations~\cite{PhysRevA.73.012112, PhysRevA.61.052306}, which can be expressed as an entropic uncertainty relation with quantum side information. For the tripartite system $(A,B,C)$ and Eve’s quantum system $E$, this yields
\begin{equation}
H(A |E) + H(A|B,C) \geq H(A).
\end{equation}

Since Alice’s raw key bit is uniformly random by construction of the protocol, we have $H(A) = 1$.

Substituting $H(A |B,C) = h(Q)$ into the above inequality, we obtain
\[
H(A| E) + h(Q) \geq 1,
\]
which finally leads to the bound
\begin{equation*}
H(A|E) \geq 1 - h(Q).
\end{equation*}

\section{Mixture of the GHZ states} \label{Mixture of the GHZ states}
The mixed state of Alice, Bob and Charlie for a $7$-qubit state, including all noise effects is 
\begin{widetext}
\begin{align*}
\rho_{ABC}
&= \eta^{7}\!\Big(
F\,|GHZ^{+}\rangle\langle GHZ^{+}|
+\frac{1-F}{2^{7}}\, \mathbb{I}_{2^{7}}
\Big) \nonumber\\
&\quad + \sum_{k=1}^6 \binom{7}{k}\eta^{(7-k)}(\bar{\eta})^k \frac{1}{2}\Big(
| 0^{\otimes (7-k)}\rangle\langle 0^{\otimes (7-k)}| 
+| 1^{\otimes (7-k)}\rangle\langle  1^{\otimes (7-k)}|
\Big) \nonumber \\
&\quad + \bar{\eta}^{7}\,
|{\rm vac}\rangle\langle{\rm vac}|. 
\end{align*}
\end{widetext}
$\mathbb{I}_{2^{7}}$ represents the identity operator corresponding to the
uniform mixture over all $2^{7}$ seven-qubit GHZ basis states. The entire expression of $\mathbb{I}_{2^{7}}$  is given below
\begin{equation}
    \mathbb{I}_{2^7} = \sum_{j=1}^{64} \left( \ket{\text{GHZ}_j^+} \bra{\text{GHZ}_j^+} + \ket{\text{GHZ}_j^-} \bra{\text{GHZ}_j^-} \right), \tag{A2}
\end{equation}
where

\begin{align*} \ket{\text{GHZ}_1^{\pm}} &= \frac{1}{\sqrt{2}} (\ket{0000000} \pm \ket{1111111}) \\
\ket{\text{GHZ}_2^{\pm}} &= \frac{1}{\sqrt{2}} (\ket{0000001} \pm \ket{1111110}) \\
&\vdots \\
\ket{\text{GHZ}_{64}^{\pm}} &= \frac{1}{\sqrt{2}} (\ket{0111111} \pm \ket{1000000}) .
\end{align*}
In the white-noise model, the target GHZ state ($\ket{\text{GHZ}_1^{+}}$) will
degrade to one of the states given above with equal probability $\frac{1-F}{2^7}$.


\end{document}